\newtheorem{theorem}{Theorem}
\newtheorem{lemma}[theorem]{Lemma}
\newtheorem{corollary}[theorem]{Corollary}
\theoremstyle{definition}
\newtheorem{definition}[theorem]{Definition}
\newtheorem{remark}[theorem]{Remark}
\begin{document}
\title{Weight Spectrum of Quasi-Perfect Binary Codes with Distance 4}
\date{}
\author{
\IEEEauthorblockN{Valentin B. Afanassiev}
\IEEEauthorblockA{Institute for Information Transmission Problems\\
(Kharkevich institute), Russian Academy of
 Sciences\\ GSP-4, Moscow, 127994, Russian Federation\\afanv@iitp.ru\\}
\and
\IEEEauthorblockN{Alexander A. Davydov}
\IEEEauthorblockA{Institute for Information Transmission Problems\\
(Kharkevich institute), Russian Academy of
 Sciences\\ GSP-4, Moscow, 127994, Russian Federation\\adav@iitp.ru\\}
}
\maketitle

\begin{abstract}
We consider the weight spectrum of a class of quasi-perfect binary linear codes with code distance 4. For example, extended  Hamming code and Panchenko code are the known members of this class. Also, it is known that in many cases Panchenko code has the minimal number of  weight 4 codewords. We give exact recursive formulas for the weight spectrum of quasi-perfect codes and their dual codes. As an example of application of the weight spectrum we derive a lower estimate for the conditional probability of correction of erasure patterns of high weights (equal to or greater than code distance).
\end{abstract}

\section{Introduction}
Calculation or estimation of the weight spectrum of linear code is one of very old unresolved problem that gives rise a long list of other unresolved problems in coding theory. Binary quasi-perfect codes has a long history in investigation but with a ``hole'' in area of weight distribution for the most of  the codes. We caught a happy chance to find a ``simple'' solution for weight spectrum of a whole class of binary quasi-perfect codes.

The other and real motivation for the research was to search most effective encoding and decoding schemes for error correction and error detection in computer memory. The physical  volume of contemporary memory cells tends to ``zero'' but the probability of error or defect in a cell tends to be very critical for a whole memory device. As a consequence of this trend we need more and more effective encoding schemes for correction of independent errors and their collections in the form of two dimensional blots.

The binary quasi-perfect extended Hamming code is traditional choice for memory devices. We suggest as a better choice Panchenko code  in original and product forms (for blot correction).
 The main our improvement over the  traditional solution is the extension of the decoding area due to  correction of detected errors as erasure patterns of  weights equal to or greater than the code distance.

\section{Quasi-perfect codes created by the doubling construction}\label{sec_doubconstr}
Let an $[n,n-r,d]$ be a linear binary code of length $n$, redundancy $r$, and minimum distance $d$. For a code with redundancy $r$ we introduce also the following notations: $n_r$ is length of the code, $H_r$ is its  parity check  matrix of size $r\times n_r$, and $d_r$ is code distance.
\begin{definition}
The \emph{doubling construction} creates a parity check  matrix $H_r$ of an $[n_r,n_r-r,d_r]$ code from a parity check  matrix $H_{r-1}$ of an $[n_{r-1},n_{r-1}-(r-1),d_{r-1}]$ code as follows
\begin{equation} \label{eq2_doubl}
H_r =\left[ {\begin{array}{ccc}
0\ldots0&|&1\ldots1\\
- - - -&|&- - -\\
H_{r-1}&|&H_{r-1}
\end{array}} \right].
\end{equation}
\end{definition}

 By \eqref{eq2_doubl} we have $n_{r}=2n_{r-1}$.

  Let us define matrices $S$ and $M$ as
\begin{equation*}
S=\left[\begin{array}{ccccc} {1} & {0} & {0} & {0} & {1} \\ {0} & {1} & {0} & {0} & {1} \\ {0} & {0} & {1} & {0} & {1} \\ {0} & {0} & {0} & {1} & {1} \end{array}\right],\qquad M=\left[\begin{array}{cc}0&1\\1&1\end{array}\right].
\end{equation*}

Denote by $H^{\text{EH}}_r$ a parity check matrix of the extended Hamming $[2^{r-1},2^{r-1}-r,4]$ code, $r\ge3$.
By \eqref{eq2_doubl}, if $H_{r-1}=M$ (resp. $H_{r-1}=H^{\text{EH}}_{r-1}$) then $H_r=H^{\text{EH}}_3$ (resp. $H_r=H^{\text{EH}}_r$).
If in \eqref{eq2_doubl} we have $d_{r-1}=3$ then $d_r=3$ since the left part of $H_r$ contains 3 linear dependent columns provided by
 the structure of  $H_{r-1}$. Finally, let $h_i$ (resp. $[0h_i]^T$ or $[1h_i]^T$) be a column of $H_{r-1}$ (resp. $H_{r}$). If in \eqref{eq2_doubl}
 $d_{r-1}\ge4$ then $d_r=4$ as the sum of columns $[0h_i]^T+[0h_j]^T+[1h_i]^T+[1h_j]^T$, $i\neq j$, is equal to zero.

\begin{definition} A code correcting $t$ errors is \emph{quasi-perfect} if its covering radius is equal to $t+1$.
\end{definition}

In particular, a quasi-perfect code with distance $d=4$ has covering radius 2. Minimum distance of any code correcting $t$ errors is equal to $2t+1$ or $2t+2$.
A linear quasi-perfect code is \emph{``non-extendable"} in the sense that addition of any column to a parity check  matrix decreases the code distance. Any linear $[n,n-r,2t+2]$ code with $2t+2\ge4$ is either a quasi-perfect one or shortening of some quasi-perfect code of redundancy $r$ and distance $2t+2$.

\begin{theorem}\label{th_DT}
\emph{\cite{DavydovTombakPPI}} Let $n_r\ge2^{r-2}+2$, $r\ge3$, and let an $[n_r,n_r-r,4]$ code  be quasi-perfect. Then a parity check  matrix $H_{r}$ of the code can be presented in the form \eqref{eq2_doubl} where matrix $H_{r-1}$ is given in one of the following three variants only:\\
$\bullet$ $H_{r-1}$ is a parity check  matrix of an $[n_{r-1},n_{r-1}-(r-1),4]$ quasi-perfect code with $n_{r-1}=\frac{1}{2}n_{r}$; \\
$\bullet$ $H_{r-1}=S$;\\
$\bullet$ $H_{r-1}=M.$
\end{theorem}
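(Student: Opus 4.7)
The plan is to reduce the doubling representation to an internal translation symmetry of the column set of $H_r$, and then to produce such a symmetry from the cap and $1$-saturating conditions together with the length bound.

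First I would reformulate. Identify $H_r$ with its column set $\mathcal{H}\subset\mathbb{F}_2^r\setminus\{0\}$, $|\mathcal{H}|=n_r$. The condition $d_r=4$ becomes ``$\mathcal{H}$ is a cap'' (no three distinct elements sum to zero), and quasi-perfectness becomes ``every vector of $W:=\mathbb{F}_2^r\setminus(\mathcal{H}\cup\{0\})$ is a sum of two distinct elements of $\mathcal{H}$.'' A doubling representation \eqref{eq2_doubl} is equivalent to the existence of a nonzero $u\in\mathbb{F}_2^r$ with $\mathcal{H}+u=\mathcal{H}$: the cap condition forces $u\notin\mathcal{H}$, and choosing any hyperplane $V\not\ni u$ yields the partition $\mathcal{H}=\mathcal{H}_0\sqcup(\mathcal{H}_0+u)$ with $\mathcal{H}_0:=\mathcal{H}\cap V$. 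A change of basis sending $u\mapsto(1,0,\ldots,0)$ and $V\mapsto\{x_1=0\}$ then writes $H_r$ exactly in the form \eqref{eq2_doubl}, with $H_{r-1}$ the matrix whose columns are $\mathcal{H}_0$ read in $V\cong\mathbb{F}_2^{r-1}$.

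The heart of the proof is producing such a $u$. For each $v\in\mathbb{F}_2^r$ set $m(v):=|\{\{c_1,c_2\}\subset\mathcal{H}:c_1+c_2=v\}|$. The cap condition gives $m(v)=0$ on $\mathcal{H}\cup\{0\}$, the $1$-saturating condition gives $m(v)\ge 1$ on $W$, and summing yields $\sum_{v\in W}m(v)=\binom{n_r}{2}$. Two distinct pairs summing to the same $v$ must use disjoint columns (a shared column would produce three collinear elements of $\mathcal{H}$), so $m(v)\le n_r/2$, with equality iff $c\mapsto c+v$ is a fixed-point-free involution of $\mathcal{H}$, i.e.\ iff $\mathcal{H}+v=\mathcal{H}$. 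The task therefore reduces to exhibiting $v\in W$ that attains this upper bound. Here is where the hypothesis $n_r\ge 2^{r-2}+2$ is crucial: the $|W|=2^r-1-n_r$ non-columns are few enough that the mass $\binom{n_r}{2}$ distributed on them forces at least one extremal vector. Concretely, I would try either a combinatorial pigeonhole on the secant lines of $\mathcal{H}$, or a character-theoretic argument using that $\mathcal{H}+u=\mathcal{H}$ is equivalent to $\mathrm{supp}\bigl(\widehat{\mathbf{1}_{\mathcal{H}}}\bigr)\subseteq u^\perp$, and then use the length bound to squeeze this Fourier support into a proper subspace. I expect this step to be the main technical obstacle and to involve borderline configurations that are precisely the source of the exceptional $S$ and $M$ below.

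Finally I would identify the possibilities for $H_{r-1}$. Given $\mathcal{H}=\mathcal{H}_0\sqcup(\mathcal{H}_0+u)$, the matrix $H_{r-1}$ with columns $\mathcal{H}_0$ inherits the cap property (a zero-summing triple in $\mathcal{H}_0$ lifts to $\mathcal{H}$ and contradicts $d_r=4$) and the $1$-saturating property in $\mathbb{F}_2^{r-1}$ (a non-covered vector in $V$ would lift to a non-covered vector in $\mathbb{F}_2^r$, contradicting quasi-perfectness of $H_r$). Hence $H_{r-1}$ is a $1$-saturating cap of distance $\ge 4$ in $\mathbb{F}_2^{r-1}$. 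If its code has distance exactly $4$, this is case (i). Otherwise its code has covering radius $2$ and distance $\ge 5$, so it is a binary perfect two-error-correcting code or a degenerate $[n,0]$ configuration; using that the only nontrivial binary perfect two-error-correcting code is the $[5,1,5]$ repetition code, a short case analysis identifies the remaining possibilities as $S$ and $M$, completing the proof.
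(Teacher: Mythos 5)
You should first note that the paper itself contains no proof of Theorem~\ref{th_DT}: it is quoted from \cite{DavydovTombakPPI}, so there is no internal argument to compare with, and your proposal has to stand on its own. Its frame is sound: the translation of $d_r=4$ and covering radius $2$ into ``$\mathcal{H}$ is a cap saturating $\mathbb{F}_2^r\setminus(\mathcal{H}\cup\{0\})$ by two-element sums'' is correct; the equivalence between the doubling form \eqref{eq2_doubl} and the existence of a nonzero $u$ with $\mathcal{H}+u=\mathcal{H}$ is correct (and $u\notin\mathcal{H}$ follows simply from $0\notin\mathcal{H}$); and the endgame is essentially right, since $\mathcal{H}_0$ inherits the cap and saturating properties, so the halved code either has $d=4$ and covering radius $2$ (case one), or has $d\ge5$ and covering radius $2$, hence is perfect two-error-correcting, which gives the $[5,1,5]$ repetition code, i.e.\ $S$, or is a dimension-zero configuration of independent saturating columns, which forces $r-1=2$, i.e.\ $M$. (Incidentally, $S$ and $M$ arise exactly here, as base cases after the symmetry is found, not as borderline obstructions inside the symmetry step, as you suggest.)

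The genuine gap is the central step: producing $v\in W$ with $m(v)=n_r/2$, equivalently $\mathcal{H}+v=\mathcal{H}$. This is where the whole theorem lives, and you only gesture at two possible strategies. The pigeonhole you hope for provably cannot work as stated: from $\sum_{v\in W}m(v)=\binom{n_r}{2}$ and $|W|=2^r-1-n_r$, the average of $m$ over $W$ at the smallest admissible length $n_r=2^{r-2}+2$ is
\begin{equation*}
\frac{\binom{2^{r-2}+2}{2}}{3(2^{r-2}-1)}\approx\frac{n_r}{12},
\end{equation*}
far below the extremal value $n_r/2$ that must be attained, so no averaging over $W$ forces an extremal $v$; one needs structural information about how the secant multiplicities $m(v)$ can distribute, which is precisely the hard content. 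Likewise, the Fourier reformulation ($\mathcal{H}+u=\mathcal{H}$ iff $\widehat{\mathbf{1}_{\mathcal{H}}}$ vanishes off $u^\perp$) merely restates the goal; you indicate no mechanism by which the bound $n_r\ge2^{r-2}+2$ squeezes the Fourier support into a hyperplane. The existence of this translation symmetry for every quasi-perfect $[n_r,n_r-r,4]$ code (equivalently, every large complete cap in $\mathrm{PG}(r-1,2)$) is the substance of the Davydov--Tombak result, and its known proof is a long structural analysis, not a counting bound. As it stands, your proposal supplies a correct reformulation and a correct final case analysis, but the core existence step is missing, so it is not a proof.
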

\begin{corollary}\label{cor1_DT}
\emph{\cite{DavydovTombakPPI}} Let $n_r\ge2^{r-2}+2$, $r\ge5$, and let an $[n_r,n_r-r,4]$ code  be quasi-perfect. Then length $n_r$ can take any value from the sequence
\begin{equation}\label{eq2_length}
n_r=2^{r-2}+2^{r-2-g} \text{ for } g=0,2,3,4,5,\ldots,r-3.
\end{equation}
Moreover, for each $g=0,2,3,4,5,\ldots,r-3$, there exists an $[n_r,n_r-r,4]$ quasi-perfect code with $n_r=2^{r-2}+2^{r-2-g}$.  Also, $n_r$ may not take any other value that is not noted in \eqref{eq2_length}.
\end{corollary}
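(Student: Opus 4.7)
My plan is to combine a recursive application of Theorem~\ref{th_DT} (to pin down the only admissible lengths) with explicit constructions (to establish existence), proceeding by induction on $r\ge 5$.

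For the completeness direction, I apply Theorem~\ref{th_DT} to a quasi-perfect $[n_r,n_r-r,4]$ code with $n_r\ge 2^{r-2}+2$, which writes $H_r$ as the doubling of some $H_{r-1}$. For $r\ge 6$ the variants $H_{r-1}=S$ and $H_{r-1}=M$ are excluded (they force $r=5$ and $r=3$ respectively), so $H_{r-1}$ must be the parity-check matrix of a quasi-perfect $[n_{r-1},n_{r-1}-(r-1),4]$ code with $n_{r-1}=n_r/2\ge 2^{r-3}+1$. If $n_{r-1}\ge 2^{r-3}+2$, the inductive hypothesis gives $n_{r-1}=2^{r-3}+2^{r-3-g}$ with $g\in\{0,2,\ldots,r-4\}$, hence $n_r=2^{r-2}+2^{r-2-g}$ with the same range of $g$. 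Otherwise $n_{r-1}=2^{r-3}+1$, so $n_r=2^{r-2}+2$, contributing the new value $g=r-3$. Together these cover exactly $\{0,2,3,\ldots,r-3\}$; the gap at $g=1$ survives because the first case would require $g=1$ at level $r-1$ (ruled out by induction) and the second case always yields $g=r-3\ge 2$ for $r\ge 5$. The base case $r=5$ is handled directly: variant~$S$ gives $n_5=10$, variant~$M$ is impossible, and in variant~(a) a second use of Theorem~\ref{th_DT} at $r=4$ rules out $n_4\in\{5,6,7\}$ (length $5$ by a direct covering-radius check on the sole distance-$4$ candidate $\{0,11110\}$, whose complement $00111$ sits at distance $3$ from both codewords; lengths $6$ and $7$ because Theorem~\ref{th_DT} at $r=4$ would require $n_3=3$ or $n_3=7/2$ respectively, neither admissible), leaving only $n_4=8$ (extended Hamming) and hence $n_5=16$.

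For the existence direction, I build a quasi-perfect code at each admissible length by iterated doubling from an appropriate seed, using the observation preceding Definition~2 that doubling preserves distance~$4$. For $g=0$ the iterated doubling of $H_2=M$ produces the extended Hamming $[2^{r-1},2^{r-1}-r,4]$ code. For $g=2$, doubling $H_4=S$ exactly $r-4$ times yields length $5\cdot 2^{r-4}=2^{r-2}+2^{r-4}$. For $g\in\{3,\ldots,r-3\}$, I start from a quasi-perfect seed of parameters $[2^g+1,\,2^g+1-(g+2),\,4]$ at redundancy $g+2$ (a Panchenko-type code) and double $r-g-2$ times; the length becomes $2^{r-g-2}(2^g+1)=2^{r-2}+2^{r-2-g}$. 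Preservation of covering radius~$2$ under doubling is a direct check on the block structure of \eqref{eq2_doubl}: every syndrome with zero top coordinate is covered via the lower-level code, while every syndrome with nonzero top coordinate is realized as a sum of at most two columns of $H_r$ by exploiting the paired structure of the doubling.

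The combinatorial recursion above is routine once Theorem~\ref{th_DT} is available; the real work sits on the existence side, in producing the quasi-perfect seed codes of length $2^g+1$ for each $g\ge 3$ (the Panchenko-type family) and verifying that doubling preserves covering radius~$2$. With these two ingredients in hand, the remainder of the argument is bookkeeping on the binary expansion of $n_r$, which collapses cleanly into the stated sequence.
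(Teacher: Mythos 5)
The paper itself offers no proof of Corollary~\ref{cor1_DT} --- it is quoted from \cite{DavydovTombakPPI} --- so your argument must stand on its own as a derivation from Theorem~\ref{th_DT}. The ``no other lengths'' half of your proof is sound: for $r\ge6$ the variants $H_{r-1}=S$ and $H_{r-1}=M$ are dimensionally impossible, the split of the induction into $n_{r-1}\ge 2^{r-3}+2$ (inductive hypothesis, same $g$) and $n_{r-1}=2^{r-3}+1$ (giving $g=r-3$) is exhaustive, and the base case $r=5$ is handled correctly: the length-$5$ case must indeed be checked by hand since Theorem~\ref{th_DT} requires $n_4\ge 6$, your covering-radius-$3$ computation for the code $\{00000,11110\}$ is right (and the repetition $[5,1,5]$ code does not interfere, since variant~(a) demands distance exactly $4$), and $n_4\in\{6,7\}$ die on $n_3=3$ and $n_3=7/2$. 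Your check that doubling preserves covering radius $2$ --- both syndromes $(0,s)$ and $(1,s)$ are sums of at most two columns of $H_r$ whenever $s$ is a sum of at most two columns of $H_{r-1}$ --- is also correct, and together with the preservation of $d=4$ noted before Definition~2 it makes the bookkeeping for $g=0$ and $g=2$ complete.

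The genuine gap is in the existence half for $g\in\{3,\ldots,r-3\}$. Your construction needs, as a seed, a quasi-perfect $[2^g+1,\,2^g+1-(g+2),\,4]$ code for every $g\ge3$, and you never produce one; you explicitly defer it as ``the real work''. These seeds cannot themselves arise from the doubling construction (their length $2^g+1$ is odd), so a separate construction is indispensable: they correspond to complete $(2^g+1)$-caps in $PG(g+1,2)$ and are supplied in \cite{DavydovTombakPPI,BrWeh,Weh}, with the paper displaying only the $g=3$ instance, a $[9,4,4]$ code. Calling them ``Panchenko-type'' is also misleading: Panchenko's code is the $g=2$ family obtained by doubling $S$, whereas the $g\ge3$ seeds form a different family. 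Until you either construct these seed codes or explicitly invoke the cited constructions, the clause ``for each $g=0,2,3,\ldots,r-3$ there exists an $[n_r,n_r-r,4]$ quasi-perfect code'' remains unproved; the remainder of your argument is correct.
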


Now we give a general description of a parity check matrix for whole class of quasi-perfect codes with distance $4$. Let
$$
B_{k,g} =\left[b_{k} \ldots b_{k}\right],\quad g\in\{0,2,3,4,5,\ldots,r-3\},
$$
be the  $(r-g-2)\times (2^g+1)$  matrix of identical columns $b_{k}$, where $r\ge5$ is code redundancy, $b_{k}$ is the binary representation of the integer $k$ (with the most significant bit  at the top position).
\begin{corollary}\label{cor2_DT}
\emph{\cite{DavydovTombakPPI}} Let $n_r=2^{r-2}+2^{r-2-g}$, $r\ge5$, $g\in\{0,2,3,4,5,\ldots,r-3\}$, and let an $[n_r,n_r-r,4]$ code  be quasi-perfect. Then a parity check  matrix $H_{r}$ of the code can be presented in the form
\begin{equation} \label{eq2_cor}
H_r =\left[ {\begin{array}{ccccccc}
B_{0,g}&|&B_{1,g}&|&&|&B_{D,g}\\
- - - &|&- - -&|&\ldots&|&---\\
H_{g+2}&|&H_{g+2}&|&&|&H_{g+2}
\end{array}} \right],
\end{equation}
where  $D=2^{r-g-2}-1$, $H_2=M$, $H_4=S$, $H_{g+2}$ is a parity check  matrix of a quasi-perfect  $[2^{g}+1,2^{g}+1-(g+2),4]$ code if $g\ge3$.
\end{corollary}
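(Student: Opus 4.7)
The plan is to argue by induction on $r$ for a fixed $g\in\{0,2,3,\dots,r-3\}$. The base case is $r=g+2$, where $n_r=2^g+1$, $D=0$, and the top block $B_{0,g}$ has $r-g-2=0$ rows, so that \eqref{eq2_cor} degenerates to the single matrix $H_{g+2}$ itself; this holds by definition.

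For the inductive step I would assume $r>g+2$ and assume Corollary~\ref{cor2_DT} for redundancy $r-1$ with the same $g$. Because $g\le r-3$ forces $n_r=2^{r-2}+2^{r-2-g}\ge 2^{r-2}+2$, the hypotheses of Theorem~\ref{th_DT} are met, so $H_r$ can be put in doubling form \eqref{eq2_doubl} with $H_{r-1}$ one of the three listed types. If $r=g+3$ then $r-1=g+2$, and the $M$, $S$, or base quasi-perfect $[2^g+1,\cdot,4]$ choice for $H_{r-1}$ is precisely $H_{g+2}$; a single application of \eqref{eq2_doubl} then produces two copies of $H_{g+2}$ topped by $0\ldots0$ and $1\ldots1$, which is exactly \eqref{eq2_cor} with $D=1$ and $B_{k,g}=[b_k\ldots b_k]$ a single row. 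If $r>g+3$, then the option $H_{r-1}=M$ or $H_{r-1}=S$ is ruled out by lengths, so $H_{r-1}$ must itself be a quasi-perfect $[n_{r-1},n_{r-1}-(r-1),4]$ code with $n_{r-1}=n_r/2=2^{r-3}+2^{r-3-g}$, which is in the range to which the inductive hypothesis applies (with the same $g$, since $g\le r-4=(r-1)-3$).

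Using the inductive form of $H_{r-1}$, with $D'=2^{r-g-3}-1$ blocks $B'_{k,g}$ of $r-g-3$ rows each, I substitute into \eqref{eq2_doubl} and regroup the $2(D'{+}1)=2^{r-g-2}$ column groups of width $2^g+1$. The topmost row assigns the bit $0$ to the left $D'{+}1$ groups and the bit $1$ to the right $D'{+}1$ groups, while the $B'_{k,g}$ entries already agree column-by-column within a group. After a permutation of the blocks the new top parts range over all $r-g-2$-bit binary patterns, each appearing in one block as $2^g+1$ identical columns; calling these patterns $b_0,b_1,\dots,b_D$ for $D=2^{r-g-2}-1$ yields exactly the form \eqref{eq2_cor}, with $H_{g+2}$ preserved at the bottom from the inductive hypothesis.

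The main thing I expect to have to do carefully is the bookkeeping in the regrouping step: checking that prepending $0$ or $1$ to the $(r-g-3)$-bit labels $b_k$ produced by the induction gives precisely all $(r-g-2)$-bit labels, each exactly $2^g+1$ times, so that up to a column permutation the block structure \eqref{eq2_cor} is obtained with the correct multiplicity. Verifying that the induction on $r$ does keep $g$ fixed (equivalently, that the quasi-perfect option in Theorem~\ref{th_DT} cannot change $g$) reduces to the identity $n_r/2=2^{(r-1)-2}+2^{(r-1)-2-g}$, which is immediate, and this is really the only arithmetic point of the argument.
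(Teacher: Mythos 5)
The paper never proves Corollary~\ref{cor2_DT}: it is quoted from \cite{DavydovTombakPPI}, and the only in-text commentary is Remark~\ref{rem2}, which records precisely the ``iterate the doubling construction $(r-g-2)$ times'' reading that your induction formalizes. So your argument cannot be compared with an internal proof; it is a reconstruction of the cited result from Theorem~\ref{th_DT} (itself cited), and as such it is essentially sound. The key points all check out: $g\le r-3$ gives $n_r\ge 2^{r-2}+2$ so Theorem~\ref{th_DT} applies at every level; the quasi-perfect variant keeps $g$ fixed because $n_r/2=2^{(r-1)-2}+2^{(r-1)-2-g}$; for $r>g+3$ the variants $M$ and $S$ are excluded by length and redundancy; and prepending $0$ (left half) or $1$ (right half) to the $(r-g-3)$-bit labels $b'_k$ produces every $(r-g-2)$-bit label exactly once, each on a block of $2^g+1$ identical columns, which after a block permutation is \eqref{eq2_cor}.

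A few places need tightening, though none is fatal. Your base case $r=g+2$ ``holds by definition'' is not right for $g\in\{0,2\}$: the codes with parity check matrices $M$ and $S$ are the $[2,0]$ and $[5,1,5]$ codes, not distance-$4$ quasi-perfect codes, so the statement at $r=g+2$ is at best vacuous there. The honest base case is $r=g+3$, which you in fact treat directly from Theorem~\ref{th_DT}; but at that step you should say why whichever of the three variants occurs must coincide with the designated $H_{g+2}$ (length $2^g+1$ and redundancy $g+2$ rule out the two wrong special matrices, and for $g\in\{0,2\}$ the quasi-perfect variant cannot occur, e.g.\ a $[5,1,4]$ code has covering radius $3$). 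Second, when $r-1=4$ (the case $g=0$, $r=5$) the inductive hypothesis you invoke is not an instance of Corollary~\ref{cor2_DT} as stated, since that requires $r\ge5$; you must make explicit that your induction establishes the form \eqref{eq2_cor} for all $r\ge g+3$, which is fine because Theorem~\ref{th_DT} only needs $r\ge3$. Finally, ``can be presented in the form'' hides a change of parity check matrix: Theorem~\ref{th_DT} hands you a particular $H_{r-1}$, while the inductive hypothesis gives an equivalent matrix $TH_{r-1}Q$; replacing $H_{r-1}$ by $TH_{r-1}Q$ simultaneously in both halves of \eqref{eq2_doubl} (and extending $T$ by the untouched top row) is harmless, but deserves one sentence in the regrouping step.
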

\begin{remark}\label{rem2}
By Corollary \ref{cor2_DT} a parity check matrix of any quasi-perfect binary code with length $2^{r-2}+2^{r-2-g}$ and redundancy $r$ can be created by $(r-g-2)$-fold applying of the doubling construction.
\end{remark}

As it is noted above, an arbitrary $[n,n-r,4]$ code is either a quasi-perfect code or shortening of some quasi-perfect code with $d=4$ and redundancy $r$. Therefore Theorem \ref{th_DT}, Corollaries \ref{cor1_DT}, \ref{cor2_DT}, and Remark \ref{rem2}, in fact, describe all binary linear codes with $d=4$ and length $\ge2^{r-2}+2$. It is why weight spectrum of codes obtained by the doubling construction \eqref{eq2_doubl} is an important problem.

The class of codes, say $\mathcal{D}$, obtained by the doubling construction is sufficiently wide. By \eqref{eq2_doubl}, the $[2^{r}-1,2^{r}-1-r,3]$ Hamming code and many its shortenings are included  in  $\mathcal{D}$.
 It follows  from Theorem \ref{th_DT} that $[2^{r-1},2^{r-1}-r,4]$ extended Hamming code and Panchenko code $\Pi _{r} $ (see below) belong to $\mathcal{D}$. Other numerous non-equivalent codes of  $\mathcal{D}$ can be obtained
by multiple application of the doubling construction to distinct quasi-perfect  $[2^{g}+1,2^{g}+1-(g+2),4]$ codes $\mathcal{C}_0$ with $g\in\{0,2,3,4,5,\ldots,r-3\}$, see \eqref{eq2_cor}. Examples of codes $\mathcal{C}_0$ can be
 found in \cite{DavydovTombakPPI,BrWeh,Weh} in algebraic and in geometrical forms. For instance, we give a parity check  matrix of a quasi-perfect  $[9,9-5,4]$ code.
\begin{equation*}
\left[\begin{array}{ccc}00000&|&1111\\10001 &|&0000 \\ 01001 & | &1001 \\
00101 & | & 0101 \\ 00011 &|& 0011 \end{array}\right].
\end{equation*}

The quasi-perfect codes $\Pi _{r} $ were proposed by V.I.\,Pan\-chenko in
paper \cite{Panchenko}. The $[n,n-r,4]$ code $\Pi _{r} $ has
length $n=5\cdot 2^{r-4} $, redundancy $r\ge
5$, and code distance $d=4$. (In paper \cite{DavydovTombakIEEE} the code $\Pi _{r} $ is denoted as
$\Pi$.)

The parity check $r\times 5\cdot 2^{r-4}$ matrix $P_{r} $ of Panchenko code $\Pi _{r} $ is the matrix $H_r$ of \eqref{eq2_cor} with $g=2$,
$D=2^{r-4} -1$, and $H_{g+2}=S$. So,
\begin{equation} \label{GrindEQ__2_7_}
P_{r} =\left[\begin{array}{ccccc} {B_{0,2} } & {B_{1,2} } & {B_{2,2} } & {\ldots } & {B_{D,2} } \\ {S} & {S} & {S} & {\ldots } & {S} \end{array}\right].
\end{equation}

Remind the known  \cite{DavydovTombakIEEE,Panchenko,AfDavZig2016} and important properties of Panchenko code and its shortenings:

 $\bullet$ For all $r$ and  $n$, there exist a shortened Panchenko code in which
the number of weight 4 codewords is close to the theoretical lower bound.

 $\bullet$ Independently of shortening algorithm, for all $r$ and  $n$,
the number of weight 4 codewords in Panchenko code and its shortenings is smaller than in a shortened extended Hamming code.

 $\bullet$ For $r=7$, $n\in\{32,33,\ldots,40\}$,  and $r=8$, $n\in\{72,73$, $\ldots,80\}$,
 Panchenko code and its shortenings by a special algorithm have the minimal number of weight 4 codewords among all other codes of the same length and redundancy.

 As the consequence of this property, Panchenko code has a small (often the minimal) probability of undetected error since this probability is essentially defined by the number of weight 4 codewords. In particular, it is important for error correction in computer memory \cite{AfDavZig2016,DavydovTombakIEEE}.

\section{Weight spectrum of  codes created by the doubling construction}

By Section \ref{sec_doubconstr}, a parity check matrix of any quasi-perfect binary code with $d=4$ can be created by multiple application of the doubling construction. Therefore, Theorems \ref{th_spect} and \ref{th_doubspectr} allow us to obtain weight spectrum of such code (and its dual) starting from weight spectrum of a short code.

We use notations introduced in the previous section. Also, for a code with redundancy $r$ we denote by $A_{w}^{(r)}$ the number of codewords of weight $w$ and by $A_{w}^{(r)\bot}$ the number of codewords of weight $w$ in the dual code.
\begin{theorem}\label{th_spect}
Let $d_r\le4$. Assume that an $[n_r,n_r-r,d_r]$ code $\mathcal{C}_r$ is created from a $[\frac{1}{2}n_{r},\frac{1}{2}n_{r}-r+1,d_{r-1}]$ code $\mathcal{C}_{r-1}$ by the doubling construction \eqref{eq2_doubl}. Then weight spectrum $\{A_{w}^{(r)},\, d_r\le w\le n_r\}$ of\/ $\mathcal{C}_r$ can be obtained from weight spectrum $\{A_{w}^{(r-1)},\, d_{r-1}\le w\le \frac{1}{2}n_r\}$ of $\mathcal{C}_{r-1}$ as follows:
\begin{equation} \label{eq3_1}
A_{2v}^{(r)}=\Delta_v^{(r)}+\sum\limits_{j=0}^{v-2}2^{2v - 2j - 1}A_{2v - 2j}^{(r-1)}\binom{\frac{1}{2}n_r-2v+2j}{j}
\end{equation}
where
\begin{equation*}
\Delta_v^{(r)}=\left\{\begin{array}{cccc}0&\text{if}&v&\text{odd}\\\binom{\frac{1}{2}n_r}{v}&\text{if}&v&\text{even}\end{array}\right.;
\end{equation*}
\begin{equation} \label{eq3_2}
A_{2v+1}^{(r)}=\sum\limits_{j=0}^{v-2}2^{2v - 2j}A_{2v+1 - 2j}^{(r-1)}\binom{\frac{1}{2}n_r-2v-1+2j}{j}.
\end{equation}
\end{theorem}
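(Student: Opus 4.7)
The plan is to parameterize codewords of $\mathcal{C}_r$ using the block structure \eqref{eq2_doubl}. Writing a codeword of length $n_r=2n_{r-1}$ as a concatenation $(u\,|\,v)$ with $u,v\in\mathbb{F}_2^{n_{r-1}}$, the two block rows of $H_r$ yield precisely the conditions $\mathrm{wt}(v)\equiv 0\pmod 2$ and $H_{r-1}(u+v)=0$. Hence the substitution $w:=u+v$ gives a bijection between codewords of $\mathcal{C}_r$ and pairs $(w,v)$ with $w\in\mathcal{C}_{r-1}$ and $v$ an even-weight vector in $\mathbb{F}_2^{n_{r-1}}$, the corresponding codeword being $(w+v\,|\,v)$.

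Next I would re-express the Hamming weight of $(w+v\,|\,v)$ in coordinates that separate the dependence on $w$ from the dependence on $v$. Write $W:=\mathrm{wt}(w)$, $a:=|\mathrm{supp}(w)\cap\mathrm{supp}(v)|$, and $b:=|\mathrm{supp}(v)\setminus\mathrm{supp}(w)|$; then $\mathrm{wt}(v)=a+b$ and $\mathrm{wt}(w+v)=W-a+b$, so the codeword weight equals $W+2b$, independent of $a$. The parity of $a+b$ is controlled by the even-weight constraint on $v$.

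I would then count. For a prescribed target weight $T$ and a fixed $w$ with $\mathrm{wt}(w)=W\ge 1$, the value of $b$ is forced to be $(T-W)/2$ (requiring $T\ge W$ and $T\equiv W\pmod 2$), and the number of admissible $v$'s is
\[
\binom{n_{r-1}-W}{b}\sum_{\,a\equiv b\,(\bmod\,2)}\binom{W}{a}\;=\;2^{W-1}\binom{n_{r-1}-W}{b},
\]
by the elementary identity $\sum_{a}(-1)^a\binom{W}{a}=0$. Multiplying by $A_W^{(r-1)}$ and summing over those $W$ of the same parity as $T$ produces $A_T^{(r)}$ in closed form; the substitution $j:=(T-W)/2$ with $T\in\{2v,2v+1\}$ then rewrites the sums exactly as \eqref{eq3_1} and \eqref{eq3_2}.

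The case $W=0$ needs separate treatment, because then only $a=0$ is available and the sum identity does not apply: the contribution is $\binom{n_{r-1}}{T/2}$ when $T/2$ is even (i.e.\ $T\equiv 0\pmod 4$) and $0$ otherwise, matching $\Delta_v^{(r)}$ after setting $T=2v$. The hypothesis $d_{r-1}\ge 3$ makes the $W\in\{1,2\}$ terms vanish, so truncating the sums at $j=v-2$ is harmless. The only real obstacle is bookkeeping—keeping straight the parities of $a+b$ and of $W$, and separating off the $w=0$ case—rather than any single delicate estimate.
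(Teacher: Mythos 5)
Your proposal is correct and is essentially the paper's own argument in cleaner form: the paper describes a weight-$T$ codeword of $\mathcal{C}_r$ through its column set in $H_r$ --- the column set of a weight-$W$ codeword of $\mathcal{C}_{r-1}$ split between the two halves ($2^{W-1}$ admissible splits, respecting the parity forced by the all-ones top row) together with $j$ columns repeated in both halves ($\binom{\frac{1}{2}n_r-W}{j}$ choices) --- which is exactly the count $2^{W-1}\binom{n_{r-1}-W}{b}$ you obtain from the bijection $(w,v)\mapsto(w+v\,|\,v)$ and the weight formula $W+2b$, with the $w=0$ case giving $\Delta_v^{(r)}$. One slip in your final paragraph: truncating the odd-weight sum \eqref{eq3_2} at $j=v-2$ drops the terms $W=3$ (at $j=v-1$) and $W=1$ (at $j=v$), so it requires $A_3^{(r-1)}=0$, i.e.\ in effect $d_{r-1}\ge 4$, not merely $d_{r-1}\ge 3$ (which only kills $W\in\{1,2\}$ and suffices for \eqref{eq3_1}); this assumption is implicit, not stated, in the theorem and is glossed over in the paper's proof as well, so your argument matches the paper's --- just state the needed hypothesis precisely rather than attributing it to $d_{r-1}\ge 3$.
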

\begin{proof}
We consider structures of  weight $w$ codewords and the structures of the corresponding sets of  $w$ columns of a parity check matrix.

Let $u\in\{r,r-1\}$. Let $c_{w,u}$ be a weight $w$ codeword of the code $\mathcal{C}_u$. Denote by $H_u(c_{w,u})$ the set of $w$ columns of the matrix $H_u$ corresponding to the codeword $c_{w,u}$. By definition, the sum of all columns of $H_u(c_{w,u})$ is equal to zero.

We describe column sets of $H_r$ in \eqref{eq2_doubl}  with the help of column sets of $H_{r-1}$ placed in the left and right  sides of \eqref{eq2_doubl}.

\textbf{(i)} Let us consider all possible structures of codewords $c_{2v,r}$ of \textbf{even weight} $2v$ and the corresponding column sets $H_r(c_{2v,r})$  in the matrix $H_r$ of \eqref{eq2_doubl}. Every such column set consists of the following components:

$\bullet$ A column set $H_{r-1}(c_{2v - 2j,r-1})$ partitioned by two parts that are placed in the left and right sides of $H_r$.

$\bullet$ Two sets of the same $j$ columns of $H_{r-1}$ placed in the left and right sides of $H_r$. (These column sets are not connected with any codewords of $\mathcal{C}_{r-1}$.)

For $j=0,1,\ldots,v-2$ and for every codeword $c_{2v - 2j,r-1}$ of even weight, we explain summands of the formula \eqref{eq3_1}.

-- \emph{The summand $\sum\limits_{j=0}^{v-2}2^{2v - 2j - 1}A_{2v - 2j}^{(r-1)}\binom{\frac{1}{2}n_r-2v+2j}{j}$ of \eqref{eq3_1}.}\\
A column set $\Gamma=H_{r-1}(c_{2v - 2j,r-1})$  is partitioned by two parts. Every part contains an odd (resp. even)  number of columns if $j$ is odd (resp. even). The partition is executed by all possible ways. The number of the partitions is equal to $2^{2v - 2j - 1}$.
The obtained parts are placed in the left and right sides of $H_r$.

Also, in every of two submatrices $H_{r-1}$ of \eqref{eq2_doubl} we take the same set of $j$ columns
that do not belong  to $\Gamma$. The number of such $j$-sets is equal to $\binom{\frac{1}{2}n_r-2v+2j}{j}$. As a result, in the right side of $H_r$ we always take an even number of columns.

-- \emph{The summand $\Delta_v^{(r)}=\binom{\frac{1}{2}n_r}{v}$ of \eqref{eq3_1}.}\\
If $v$ is even then in every of two submatrices $H_{r-1}$ of \eqref{eq2_doubl} we take the same set of $v$ columns. The number of variants is equal to $\binom{\frac{1}{2}n_r}{v}$.

\textbf{(ii)} Let us consider all possible structures of codewords $c_{2v+1,r}$ of \textbf{odd weight} $2v+1$ and the corresponding column sets $H_r(c_{2v+1,r})$  in the matrix $H_r$ of \eqref{eq2_doubl}. Every such column set consists of the following components:

$\bullet$ A column set $H_{r-1}(c_{2v+1 - 2j,r-1})$ partitioned by two parts that are placed in the left and right sides of $H_r$.

$\bullet$ Two sets of the same $j$ columns of $H_{r-1}$ placed in the left and right sides of $H_r$. (These column sets are not connected with any codewords of $\mathcal{C}_{r-1}$.)

For $j=0,1,\ldots,v-2$ and for every codeword $c_{2v+1 - 2j,r-1}$ of odd weight, we explain the formula \eqref{eq3_2}.

A column set $\Gamma=H_{r-1}(c_{2v+1 - 2j,r-1})$  is partitioned by two parts. One  part, say $A_{odd}$, contains an odd number of columns, another part, say $B_{even}$, contains an even number of columns. The partition is executed by all possible ways. The number of the partitions is equal to $2^{2v - 2j}$.

If $j$ is odd  then the part $B_{even}$ (resp. $A_{odd}$) is placed in the left (resp. right) side of $H_r$.

If $j$ is even or $j=0$ then the part $A_{odd}$ (resp. $B_{even}$) is placed in the left (resp. right) side of $H_r$.

Also, in every of two submatrices $H_{r-1}$ of \eqref{eq2_doubl} we take the same set of $j$ columns
that do no belong  to $\Gamma$. The number of such $j$-sets is equal to $\binom{\frac{1}{2}n_r-2v-1+2j}{j}$. As a result, in the right side of $H_r$ we always take an even number of columns.
\end{proof}

Now we give the weight spectrum for codes dual to quasi-perfect ones.
\begin{theorem}\label{th_doubspectr}
Let $d_r\le4$. Assume that an $[n_r,n_r-r,d_r]$ code $\mathcal{C}_r$ is created from a $[\frac{1}{2}n_{r},\frac{1}{2}n_{r}-r+1,d_{r-1}]$ code $\mathcal{C}_{r-1}$ by the doubling construction \eqref{eq2_doubl}. Let $\frac{1}{2}n_{r}$ be even. Then weight spectrum $\{A_{w}^{(r)\bot},\, w\le n_r\}$ of the  $[n_r,r,d_r^\bot]$ code dual to $\mathcal{C}_r$ can be obtained from weight spectrum $\{A_{w}^{(r-1)\bot},\, w\le \frac{1}{2}n_r\}$ of the $[\frac{1}{2}n_{r},r-1,d_{r-1}^\perp]$ code dual to $\mathcal{C}_{r-1}$ as follows:
\begin{equation} \label{eq3_3}
A_{2v}^{(r)\bot} = A_v^{(r - 1)\bot}  + \left\{ {\begin{array}{ccc}
0&\text{if}&{2v \ne\frac{1}{2}n_r}\\
{{2^{r - 1}}}&\text{if}&{2v =\frac{1}{2}n_r}
\end{array}} \right..
\end{equation}
\end{theorem}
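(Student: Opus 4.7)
The plan is to enumerate codewords of $\mathcal{C}_r^\perp$ directly as linear combinations of rows of the parity-check matrix $H_r$ in \eqref{eq2_doubl}, which is much simpler than the column-partition bookkeeping used for Theorem~\ref{th_spect}. Set $N=\tfrac{1}{2}n_r$ and label the rows of $H_r$ as $g_0=(\mathbf{0}_N\,|\,\mathbf{1}_N)$ together with $g_i=(h_i\,|\,h_i)$ for $i=1,\dots,r-1$, where $h_1,\dots,h_{r-1}$ are the rows of $H_{r-1}$. Every codeword of $\mathcal{C}_r^\perp$ is then uniquely of the form
\[
a_0 g_0 + \sum_{i=1}^{r-1} a_i g_i \;=\; (h\,|\,h+a_0\mathbf{1}_N),
\]
where $a_0\in\{0,1\}$ and $h=\sum_{i=1}^{r-1} a_i h_i$ ranges independently over all $2^{r-1}$ codewords of $\mathcal{C}_{r-1}^\perp$; independence uses that $H_{r-1}$ has full row rank.

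Next, I would split on the value of $a_0$. For $a_0=0$ the codeword is $(h,h)$ of weight $2\,\mathrm{wt}(h)$; as $h$ runs through $\mathcal{C}_{r-1}^\perp$ this contributes exactly $A_v^{(r-1)\perp}$ codewords of weight $2v$ for every $v$. For $a_0=1$ the codeword is $(h,h+\mathbf{1}_N)$ whose weight is $\mathrm{wt}(h)+(N-\mathrm{wt}(h))=N=\tfrac{1}{2}n_r$ independently of $h$, so this branch contributes exactly $2^{r-1}$ codewords, all of the same weight $\tfrac{1}{2}n_r$.

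Adding the two branches yields \eqref{eq3_3}: at a weight $2v\ne\tfrac{1}{2}n_r$ only the $a_0=0$ branch contributes $A_v^{(r-1)\perp}$, whereas at $2v=\tfrac{1}{2}n_r$ the extra $2^{r-1}$ codewords from the $a_0=1$ branch must be added on top. The hypothesis that $N=\tfrac{1}{2}n_r$ is even is precisely what lets us write the second branch's common weight $N$ as $2v$ with $v=n_r/4$, keeping both contributions inside the even-weight part of the spectrum; as a by-product, every codeword of $\mathcal{C}_r^\perp$ turns out to have even weight, matching the fact that the theorem lists only $A_{2v}^{(r)\perp}$. I do not foresee a real obstacle: the only verification of substance is that $h$ varies over $\mathcal{C}_{r-1}^\perp$ independently of $a_0$, which is immediate from the row-rank of $H_{r-1}$, and the remainder is a one-line weight computation.
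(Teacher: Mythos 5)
Your proof is correct and follows essentially the same route as the paper: the authors likewise view the matrix \eqref{eq2_doubl} as a generator matrix of the dual code, split according to whether the top row $(\mathbf{0}\,|\,\mathbf{1})$ is included, and observe that without it the weight is doubled while with it the weight is always $\frac{1}{2}n_r$. Your write-up merely makes explicit the details (unique representation via the full row rank of $H_{r-1}$, and the role of the evenness of $\frac{1}{2}n_r$) that the paper's terse proof leaves implicit.
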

\begin{proof}
    We consider matrix \eqref{eq2_doubl} as a generator matrix of the dual code. If codeword of the dual code is created without  inclusion the top row, then its weight is equal to the doubled weight of the corresponding word formed from rows of matrix $H_{r-1}$. If the top row is included into codeword, its weight is equal to $\frac{1}{2}n_r$.
\end{proof}

\section{On correction of erasure patterns of high weight}
Knowledge of the weight spectrum of a code opens a way for calculation of very important probabilities for the code, like conditional probability of correct decoding of erasure patterns, probability of undetected error and so on. In binary codes, the number of parity check bits is  larger than code distance. That is a good reason to investigate a total ability of  codes to correct erasure patterns of high weights (equal to or greater than code distance).

The necessary condition for correction of weight $\rho$ erasure patterns  is the full rank of submatrix, consisting of columns of a code parity check matrix, corresponding to erased positions.

Let $S_{\rho}$ be the number of erasure patterns of weight  $\rho$, which can be corrected by a code (equivalently, for a code parity check matrix, $S_{\rho}$ is the number of distinct sets of $\rho$ linear independent columns  or the number of distinct $r\times\rho$ submatrices of the full rank).

For a code of length $n$, let $\delta_\rho=\frac{S_{\rho }}{\binom{n}{\rho}}$ be the conditional probability of correct decoding of erasure patterns of weight~$\rho$.

In further, for $[n,n-r,d]$ code with weight spectrum $A_0,A_1,\ldots,A_n$ we introduce the function
\begin{gather}\label{GrindEQ__2_1_}
\Psi (n,d,\rho )=\binom{n}{\rho }-\sum _{w=d}^{\rho}A_{w}\binom{n-w}{\rho-w},\quad d\le \rho \le r.
\end{gather}
This function gives a lower estimate of $S_{\rho}$, see \cite{AfDavZig2016_4,Popov}.

\begin{theorem}\label{th2.1}
For an $[n,n-r,d]$ code, the conditional probability $\delta_\rho$ and the value $S_{\rho }$ satisfy the following lower estimates:
\begin{equation}\label{GrindEQ__2_2_}
\delta _{\rho } \ge \frac{\Psi (n,d,\rho )}{\binom{n}{\rho }},\quad S_{\rho } \ge \Psi (n,d,\rho ),\quad d\le \rho \le r.
\end{equation}
In particular, the following equalities
\begin{equation*} \label{GrindEQ__2_3_}
\delta _{\rho } =\frac{\Psi (n,d,\rho )}{\binom{n}{\rho }},\quad S_{\rho } =\Psi (n,d,\rho ),
\end{equation*}
hold under the condition
\begin{equation*} \label{GrindEQ__2_4_}
\rho \le d+ \frac{d-1}{2}\,.
\end{equation*}
\end{theorem}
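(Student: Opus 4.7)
The plan is to interpret the complement of $S_\rho$ combinatorially. Since $S_\rho$ counts $\rho$-subsets of columns of the parity check matrix $H$ that have full rank $\rho$, the quantity $\binom{n}{\rho}-S_\rho$ counts the $\rho$-subsets of column indices that are linearly dependent. A subset $T\subseteq\{1,\dots,n\}$ of size $\rho$ is dependent exactly when the restricted submatrix has a nonzero kernel vector, i.e.\ when $T$ contains the support of some nonzero codeword $c$ with $w(c)\le \rho$. Denoting by $\mathcal{T}_c$ the family of $\rho$-subsets that contain $\operatorname{supp}(c)$, one therefore has
$$\binom{n}{\rho}-S_\rho=\Bigl|\bigcup_{c\neq 0,\,w(c)\le \rho}\mathcal{T}_c\Bigr|,\qquad |\mathcal{T}_c|=\binom{n-w(c)}{\rho-w(c)}.$$

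For the inequalities in \eqref{GrindEQ__2_2_}, I would apply the union bound and group codewords by their weight, obtaining
$$\binom{n}{\rho}-S_\rho\;\le\;\sum_{c\neq 0,\,w(c)\le\rho}|\mathcal{T}_c|\;=\;\sum_{w=d}^{\rho}A_w\binom{n-w}{\rho-w}.$$
Rearranging gives $S_\rho\ge \Psi(n,d,\rho)$, and dividing by $\binom{n}{\rho}$ yields the lower bound on $\delta_\rho$.

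For equality under the hypothesis $\rho\le d+(d-1)/2$, I would show that the sets $\mathcal{T}_c$ are pairwise disjoint, so the union bound is tight. If distinct nonzero codewords $c_1,c_2$ satisfied $\operatorname{supp}(c_1),\operatorname{supp}(c_2)\subseteq T$ for some $\rho$-subset $T$, then $c_1+c_2$ is also a nonzero codeword and
$$w(c_1+c_2)=w(c_1)+w(c_2)-2|\operatorname{supp}(c_1)\cap\operatorname{supp}(c_2)|\ge d$$
would force $|\operatorname{supp}(c_1)\cup\operatorname{supp}(c_2)|\ge\tfrac{1}{2}(w(c_1)+w(c_2)+d)\ge 3d/2$, whereas $|\operatorname{supp}(c_1)\cup\operatorname{supp}(c_2)|\le\rho\le (3d-1)/2<3d/2$, a contradiction. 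Hence the union is a disjoint union and the bound is attained.

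The only real care point, rather than a genuine obstacle, is making the combinatorial correspondence ``dependent $\rho$-subset $\Leftrightarrow$ contains the support of a nonzero codeword of weight $\le\rho$'' explicit, and handling the parities of $d$ cleanly in the inequality $\rho<3d/2$; otherwise the proof reduces to Boole's inequality plus a short Singleton-style support calculation.
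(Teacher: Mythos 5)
Your proof is correct and follows essentially the same route the paper takes: the paper's one-sentence proof (deferring details to its references) identifies $\binom{n}{\rho}-S_\rho$ with the number of linearly dependent $\rho$-column patterns, and your union bound over codeword supports together with the disjointness argument for $\rho\le d+\frac{d-1}{2}$ is precisely the standard completion of that counting argument. No gaps: the correspondence between dependent $\rho$-sets and contained codeword supports and the bound $|\operatorname{supp}(c_1)\cup\operatorname{supp}(c_2)|\ge 3d/2>\rho$ are both sound.
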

The proof of Theorem \ref{th2.1} is based on the fact that the value $S_{\rho }$ is equal to the difference between the total number of sets of $\rho$ columns of a parity check matrix and the number of patterns of $\rho$ linear dependent columns.

Now we use the known binomial approximation of weight spectrum of a binary linear code \cite{Cheung,KrasLits,MWS}
 $A_{w} \approx 2^{-z} \binom{n}{w }$, $ r-1< z\le r,\; w\ge d,$ where  $z$ is a real value taking into account (in principle) correction terms in the mentioned approximations and the weight region $w\ge d$. We obtain the following
 approximation of the function $S_\rho$  for the region $d\le \rho\le r$.
\begin{align*}
&S_{\rho} \geq\binom{n}{\rho }-\sum _{w=d}^{\rho}A_{w}\binom{n-w}{\rho-w}\\
 &\approx \binom{n}{\rho }-2^{-z} \sum _{w=d}^{\rho}\binom{n}{w}\binom{n-w}{\rho-w}\\
  &=\binom{n}{\rho }-2^{-z} \binom{n}{\rho }\sum _{w=d}^{\rho}\binom{\rho}{w} .
\end{align*}
From here, using \cite[Lemma 10.8]{MWS}, we obtain an estimate of the conditional probability $\delta_\rho$ of correct decoding of erasure patterns of high weight~$\rho$.
\begin{align*}
&\delta _{\rho} \geq\frac{S_{\rho} }{\binom{n}{\rho }}\approx 1-2^{-z} \sum _{w=d}^{\rho}\binom{\rho}{w}\\
& \approx1-2^{-z}\cdot2^{\rho H(d/\rho)}\ge 1-2^{\rho-z} ,\; d\le \rho<z,
\end{align*}
where $H(d/\rho)$ is the binary entropy.

The proposed estimate shows that for a fixed $r$, the probability $\delta_\rho$ decreases exponentially with growth of $\rho$. Therefore the reasonable extended region of correctable erasure patterns is $\rho<2d$.

The following lemma allows us to improve estimates of Theorem \ref{th2.1} using a recursive approach.
\begin{lemma}
    Any set of $\rho$ linear dependent columns of a parity check matrix is an union of $w$ columns with the zero sum (corresponding to a weight $w$ codeword ) and a set of $\rho-w$ linear independent columns, where $d\le w\le\rho$.
\end{lemma}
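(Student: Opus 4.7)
The plan is to exploit the fact that, over $\mathbb{F}_2$, linear dependence of a set of columns is equivalent to the existence of a nonempty subset whose sum is zero. From there, I would build the desired zero-sum part $\Gamma_w$ greedily, absorbing dependencies from the leftover columns until what remains is linearly independent.

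More concretely, let $\Gamma$ denote the given set of $\rho$ linearly dependent columns of the parity check matrix. Any nontrivial $\mathbb{F}_2$-linear combination of columns of $\Gamma$ equaling zero corresponds to a nonempty subset $T\subseteq\Gamma$ with $\sum_{c\in T}c=0$, i.e.\ a codeword of weight $|T|$; hence $|T|\ge d$. I would choose such a $T$ and call it $\Gamma_w^{(1)}$. Now inspect the leftover set $\Gamma\setminus\Gamma_w^{(1)}$. If it is linearly independent, the decomposition is already in hand. Otherwise, by the same observation applied to $\Gamma\setminus\Gamma_w^{(1)}$, there is a nonempty $T'\subseteq\Gamma\setminus\Gamma_w^{(1)}$ with $\sum_{c\in T'}c=0$. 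Since $T$ and $T'$ are disjoint and each sums to zero, their union $\Gamma_w^{(2)}=\Gamma_w^{(1)}\cup T'$ also sums to zero over $\mathbb{F}_2$, so $\Gamma_w^{(2)}$ again corresponds to a codeword (of weight at least $d$, in fact strictly larger than $|\Gamma_w^{(1)}|$).

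Iterating this absorption step produces a strictly increasing chain $\Gamma_w^{(1)}\subsetneq\Gamma_w^{(2)}\subsetneq\cdots$ of zero-sum subsets of $\Gamma$. Since $|\Gamma|=\rho$ is finite, the process must terminate at some $\Gamma_w$ with the property that $\Gamma\setminus\Gamma_w$ is linearly independent. By construction, $\Gamma_w$ is a zero-sum subset of columns, so $w=|\Gamma_w|$ is the weight of a nonzero codeword and therefore $d\le w\le\rho$; the complementary $\rho-w$ columns are linearly independent, giving precisely the decomposition claimed.

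The only genuinely delicate point is verifying that disjoint unions of zero-sum subsets remain zero-sum, which is where working over $\mathbb{F}_2$ is essential (no cancellation issues with coefficients). Beyond that, the argument is a straightforward termination argument, so I do not expect any significant obstacle.
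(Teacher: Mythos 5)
Your argument is correct: over $\mathbb{F}_2$ linear dependence is exactly the existence of a nonempty zero-sum subset, disjoint zero-sum subsets merge into a zero-sum subset, and the strictly growing chain must stop with an independent remainder, which yields the claimed decomposition with $w\ge d$ because the indicator vector of the zero-sum positions is a nonzero codeword. The paper states this lemma without proof, so there is nothing to compare against; your absorption argument is the natural justification (equivalently, one can take $\Gamma_w$ to be a zero-sum subset of maximum size and observe that its complement must be independent, which is the same idea in extremal rather than iterative form).
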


We give a recursive form of function of type \eqref{GrindEQ__2_1_} :
\begin{equation*}\tilde{\Psi }(n,d,\rho )=\binom{n}{\rho}-\sum _{w=d}^{\rho }A_w(n) \tilde{\Psi }(n-w,d,\rho -w),
\end{equation*}
where $A_w(n)$ is the number of weight $w$ words in a (shortened) code of length $n$.

A recursive estimate of the conditional probability of correct decoding of erasure patterns of weight~$\rho$ and  the first and second steps of the recursion has the form, respectively,
\begin{align*}
&\tilde{\delta }\left(n,d,\rho \right)=\frac{\tilde{\Psi }(n,d,\rho )}{\binom{n}{\rho}}\\
& =1-\sum _{w=d}^{\rho }A_w(n) \tilde{\delta }(n-w,d,\rho -w)\frac{\binom{n-w}{\rho-w}}{\binom{n}{\rho}}; \\
&\tilde{\delta_2 }\left(n,d,\rho \right)
=1-\sum _{w_1=d}^{\rho }A_{w_1}(n) \frac{\binom{n-w_1}{\rho-w_{1}}}{\binom{n}{\rho}}\,\times\\
 &\times\left[1-\sum _{w_2=d}^{\rho -w_1}A_{w_2}\left(n-w_1\right)\frac{\binom{n-w_1-w_2}{\rho-w_1-w_2}}{\binom{n-w_1}{\rho-w_1}}  \right].
\end{align*}

\section{Application to memory }
An important area for application of quasi-perfect codes is computer memory (Flash or SSD). Their ability to  correct a big number of erasures instead of one error and very low probability of undetected error gives us a strong incentive to investigate the conditional probability of correct decoding for erasure patterns of high weight.
As an example, useful for application, we give two tables: the first one for conditional probability of correct decoding for erasure patterns of weights higher the code distance and the second one for the probability (unconditional) of decoding failure in memory channel with different error probability for the product of Panchenko codes.

Decoding algorithm for  product of Panchenko codes consists of following steps.
\begin{enumerate}
  \item Error detection in rows and columns of the received word (in parallel).
  \item Check (in parallel) of the detected row (column) list for correctability as erasure pattern.
  \item Correction of the chosen erasure pattern (row or\, column) and output.
\end{enumerate}

Check for correctability is executed in extended area up to $d^{+}$ erasures.

Table \ref{tab1} gives a comparison between Hamming and Pan\-chenko codes with 7 and 8 parity symbols. We can see from the table that extended decoding with  correction of  4, 5, 6, 7 erasures has decreasing probability from 1 up to 1/2 (approximately).

Table \ref{tab2} demonstrates fast decreasing of the probability of decoding failure for fixed number of parity bits with extension of the decoding area for product of two Panchenko codes. We can see from the second table  fast decreasing of the failure probability with extension of the decoding area from 3 up to 6 erasures.
\begin{table}[htbp]
\caption{Conditional probability $\delta_\rho$ of correct decoding of erasure patterns of
weight~$\rho$ for Hamming and Panchenko codes}
\label{tab1}
\begin{center}
\begin{tabular}{|c|c|c|c|c|c|}
  \hline
code & $r$ &  $\rho=d=4$ & $\rho =5$ & $\rho =6$ & $\rho =7$ \\\hline
Hamming & 7 & 0.9836 & 0.9180 & 0.7469 & 0.4121 \\
Panchenko & 7 & 0.9870 & 0.9287 & 0.7656 & 0.4306 \\\hline
Hamming   & 8& 0.9920 & 0.9600 & 0.8741 & 0.6879 \\
Panchenko & 8 & 0.9934 & 0.9647 & 0.8830 & 0.6996 \\
  \hline
\end{tabular}
\end{center}
\end{table}

\begin{table}[htbp]
\begin{center}
\caption{Failure probability for product of Panchenko codes $[72,64,4]$}
\label{tab2}
\begin{tabular}{|c|c|c|c|c|c|} \hline
$p$  & $10^{-1} $ & $10^{-2} $ & $5\cdot 10^{-3} $ & $10^{-3} $ & $5\cdot 10^{-4} $  \\ \hline
$d^{+} =3$ & 1 & 0,996 & 0,250 & 1,1e-09 & 2,3e-14 \\ \hline
$d^{+} =4$ & 1 & 0,988 & 0,092 & 1,6e-12 & 5,1e-18  \\ \hline
$d^{+} =5$ & 1 & 0,967 & 0,027 & 7,0e-14 & 1,045e-18  \\ \hline
$d^{+} =6$ & 1 & 0,926 & 0,008 & 5,8e-14 & 1,029e-18  \\ \hline
\end{tabular}
\end{center}
\end{table}

\section*{Acknowledgment}
The research  was carried out at the IITP RAS at the expense of the Russian
Foundation for Sciences (project 14-50-00150).

\end{document}